\documentclass[pra,aps,nopacs,onecolumn,twoside,superscriptaddress]{revtex4}

\usepackage{amsmath,amsfonts,amssymb,caption,color,epsfig,graphics,graphicx,hyperref,latexsym,mathrsfs,revsymb,theorem,url,verbatim,epstopdf,mathtools,enumerate}

\hypersetup{colorlinks,linkcolor={blue},citecolor={blue},urlcolor={red}}

\newtheorem{definition}{Definition}
\newtheorem{proposition}[definition]{Proposition}
\newtheorem{lemma}[definition]{Lemma}

\newtheorem{theorem}[definition]{Theorem}
\newtheorem{corollary}[definition]{Corollary}
\newtheorem{conjecture}[definition]{Conjecture}

\newtheorem{remark}[definition]{Remark}
\newtheorem{example}[definition]{Example}
\newtheorem{question}[definition]{Question}
\newtheorem{memo}[definition]{Memo}

\def\squareforqed{\hbox{\rlap{$\sqcap$}$\sqcup$}}
\def\qed{\ifmmode\squareforqed\else{\unskip\nobreak\hfil
\penalty50\hskip1em\null\nobreak\hfil\squareforqed
\parfillskip=0pt\finalhyphendemerits=0\endgraf}\fi}
\def\endenv{\ifmmode\;\else{\unskip\nobreak\hfil
\penalty50\hskip1em\null\nobreak\hfil\;
\parfillskip=0pt\finalhyphendemerits=0\endgraf}\fi}
\newenvironment{proof}{\noindent \textbf{{Proof.~} }}{\qed}
\def\Dbar{\leavevmode\lower.6ex\hbox to 0pt
{\hskip-.23ex\accent"16\hss}D}
\makeatletter
\def\url@leostyle{%
  \@ifundefined{selectfont}{\def\UrlFont{\sf}}{\def\UrlFont{\small\ttfamily}}}
\makeatother
\def\bcj{\begin{conjecture}}
\def\ecj{\end{conjecture}}
\def\bcr{\begin{corollary}}
\def\ecr{\end{corollary}}
\def\bd{\begin{definition}}
\def\ed{\end{definition}}
\def\bea{\begin{eqnarray}}
\def\eea{\end{eqnarray}}
\def\beq{\begin{equation}}
\def\eeq{\end{equation}}
\def\bal{\begin{aligned}}
\def\eal{\end{aligned}}
\def\bem{\begin{enumerate}}
\def\eem{\end{enumerate}}
\def\bex{\begin{example}}
\def\eex{\end{example}}
\def\bim{\begin{itemize}}
\def\eim{\end{itemize}}
\def\bl{\begin{lemma}}
\def\el{\end{lemma}}
\def\bma{\begin{bmatrix}}
\def\ema{\end{bmatrix}}
\def\bpf{\begin{proof}}
\def\epf{\end{proof}}
\def\bpp{\begin{proposition}}
\def\epp{\end{proposition}}
\def\bqu{\begin{question}}
\def\equ{\end{question}}
\def\br{\begin{remark}}
\def\er{\end{remark}}
\def\bt{\begin{lemma}}
\def\et{\end{lemma}}
\def\bmm{\begin{memo}}
\def\emm{\end{memo}}

\def\btb{\begin{tabular}}
\def\etb{\end{tabular}}

\newcommand{\nc}{\newcommand}

\nc{\bbA}{\mathbb{A}} \nc{\bbB}{\mathbb{B}} \nc{\bbC}{\mathbb{C}}
 \nc{\bbD}{\mathbb{D}} \nc{\bbE}{\mathbb{E}} \nc{\bbF}{\mathbb{F}}
 \nc{\bbG}{\mathbb{G}} \nc{\bbH}{\mathbb{H}} \nc{\bbI}{\mathbb{I}}
 \nc{\bbJ}{\mathbb{J}} \nc{\bbK}{\mathbb{K}} \nc{\bbL}{\mathbb{L}}
 \nc{\bbM}{\mathbb{M}} \nc{\bbN}{\mathbb{N}} \nc{\bbO}{\mathbb{O}}
 \nc{\bbP}{\mathbb{P}} \nc{\bbQ}{\mathbb{Q}} \nc{\bbR}{\mathbb{R}}
 \nc{\bbS}{\mathbb{S}} \nc{\bbT}{\mathbb{T}} \nc{\bbU}{\mathbb{U}}
 \nc{\bbV}{\mathbb{V}} \nc{\bbW}{\mathbb{W}} \nc{\bbX}{\mathbb{X}}
 \nc{\bbZ}{\mathbb{Z}}

 \nc{\bA}{{\bf A}} \nc{\bB}{{\bf B}} \nc{\bC}{{\bf C}}
 \nc{\bD}{{\bf D}} \nc{\bE}{{\bf E}} \nc{\bF}{{\bf F}}
 \nc{\bG}{{\bf G}} \nc{\bH}{{\bf H}} \nc{\bI}{{\bf I}}
 \nc{\bJ}{{\bf J}} \nc{\bK}{{\bf K}} \nc{\bL}{{\bf L}}
 \nc{\bM}{{\bf M}} \nc{\bN}{{\bf N}} \nc{\bO}{{\bf O}}
 \nc{\bP}{{\bf P}} \nc{\bQ}{{\bf Q}} \nc{\bR}{{\bf R}}
 \nc{\bS}{{\bf S}} \nc{\bT}{{\bf T}} \nc{\bU}{{\bf U}}
 \nc{\bV}{{\bf V}} \nc{\bW}{{\bf W}} \nc{\bX}{{\bf X}}
 \nc{\bZ}{{\bf Z}}

\nc{\cA}{{\cal A}} \nc{\cB}{{\cal B}} \nc{\cC}{{\cal C}}
\nc{\cD}{{\cal D}} \nc{\cE}{{\cal E}} \nc{\cF}{{\cal F}}
\nc{\cG}{{\cal G}} \nc{\cH}{{\cal H}} \nc{\cI}{{\cal I}}
\nc{\cJ}{{\cal J}} \nc{\cK}{{\cal K}} \nc{\cL}{{\cal L}}
\nc{\cM}{{\cal M}} \nc{\cN}{{\cal N}} \nc{\cO}{{\cal O}}
\nc{\cP}{{\cal P}} \nc{\cQ}{{\cal Q}} \nc{\cR}{{\cal R}}
\nc{\cS}{{\cal S}} \nc{\cT}{{\cal T}} \nc{\cU}{{\cal U}}
\nc{\cV}{{\cal V}} \nc{\cW}{{\cal W}} \nc{\cX}{{\cal X}}
\nc{\cZ}{{\cal Z}}

\nc{\hA}{{\hat{A}}} \nc{\hB}{{\hat{B}}} \nc{\hC}{{\hat{C}}}
\nc{\hD}{{\hat{D}}} \nc{\hE}{{\hat{E}}} \nc{\hF}{{\hat{F}}}
\nc{\hG}{{\hat{G}}} \nc{\hH}{{\hat{H}}} \nc{\hI}{{\hat{I}}}
\nc{\hJ}{{\hat{J}}} \nc{\hK}{{\hat{K}}} \nc{\hL}{{\hat{L}}}
\nc{\hM}{{\hat{M}}} \nc{\hN}{{\hat{N}}} \nc{\hO}{{\hat{O}}}
\nc{\hP}{{\hat{P}}} \nc{\hR}{{\hat{R}}} \nc{\hS}{{\hat{S}}}
\nc{\hT}{{\hat{T}}} \nc{\hU}{{\hat{U}}} \nc{\hV}{{\hat{V}}}
\nc{\hW}{{\hat{W}}} \nc{\hX}{{\hat{X}}} \nc{\hZ}{{\hat{Z}}}

\nc{\hn}{{\hat{n}}}

\newcommand{\bra}[1]{\langle#1|}
\newcommand{\ket}[1]{|#1\rangle}

\newcommand{\opp}{\red{OPEN PROBLEMS}.~}

\newcommand{\red}{\textcolor{red}}

\def\Dbar{\leavevmode\lower.6ex\hbox to 0pt
{\hskip-.23ex\accent"16\hss}D}

\begin{document}


\title{Markov Constraints Enhance Identifiability in Quantum Shadow Inversion}

\author{Zhixing Chen}\email[]{2409108@buaa.edu.cn}
\affiliation{LMIB(Beihang University), Ministry of education, and School of Mathematical Sciences, Beihang University, Beijing 100191, China} 

\author{Lin Chen}\email[]{linchen@buaa.edu.cn (corresponding author)}
\affiliation{LMIB(Beihang University), Ministry of education, and School of Mathematical Sciences, Beihang University, Beijing 100191, China}

\begin{abstract}
We study quantum shadow inversion under Markovian locality constraints
for four-partite systems arranged along the chain $A$--$B$--$C$--$D$.
The goal is to reproduce the expectation value of a fixed endpoint
observable $O_{AD}$ after an unknown global unitary, without requiring
full unitary inversion. We formulate the task using Markov-admissible
supermaps and introduce the Markov-implementable centralizer to describe
the remaining endpoint gauge freedom. We show that unrestricted
endpoint post-processing is too broad, and impose an endpoint-local
refinement. Under this condition, every implementable endpoint unitary
must factorize across $A|D$, so the Markov constraint strictly reduces
the centralizer-induced shadow ambiguity whenever the full centralizer
contains non-product unitaries. This provides a structural mechanism by
which Markov locality enhances identifiability in quantum shadow
inversion.
\end{abstract}

\maketitle

Keywords: Markov chain, entanglement, measurement, Quantum Shadow Inversion


\section{Introduction}
\label{sec:intro}

Quantum unitary inversion is a fundamental primitive in quantum information processing, with applications in quantum metrology, quantum circuit verification, and quantum control \cite{1,2}. In its full form, one seeks a protocol that implements the inverse channel \(\mathcal{U}^\dagger\) for any unknown unitary \(U\). However, in many practical scenarios, it suffices to reproduce only the expectation value of a fixed observable, rather than the entire unitary action. This relaxed goal is known as \emph{shadow unitary inversion} \cite{Aaronson2018,Huang2020}, and it offers a resource-efficient alternative when global reconstruction is infeasible due to limited coherence or measurement capabilities \cite{Kliesch2021,Ducuara2020}.

The theory of quantum supermaps, introduced by Chiribella et al. \cite{1,2}, provides the natural framework for describing protocols that transform quantum channels. In particular, a one-slot quantum supermap (also called a 1-comb) is represented by a positive operator that maps the Choi operator of an input channel to that of an output channel, as will be presented in Example~\ref{ex:one_slot_supermap}. This framework has been extended to quantum networks with memory, enabling the study of adaptive protocols and causal structures \cite{Gutoski2007}.

In multipartite settings, the observable of interest may act on a subset of the total system. For a four-partite system with subsystems \(A,B,C,D\) arranged along a chain, one may only care about an endpoint observable \(O_{AD}\) acting on \(\mathcal{H}_A \otimes \mathcal{H}_D\). The goal is then to design a protocol that reproduces the expectation value of \(O_{AD}\) after an unknown global unitary \(U_{ABCD}\), without requiring full unitary inversion. This leads naturally to the notion of \emph{four-body shadow Markov inversion}, where the protocol is constrained to respect a Markovian locality structure along the chain, so that encoding and decoding operations act only locally with possible memory, but do not allow arbitrary global processing \cite{Brandao2015,Buscemi2017}. This architecture is inspired by recent work on quantum Markov chains and local operations with bounded communication \cite{Kretschmann2005,Perez-Delgado2007}.

A central difficulty in shadow inversion is the intrinsic ambiguity caused by the observable's symmetry. If \(O_{AD}\) has degenerate eigenvalues, its centralizer
\[
C_{O_{AD}} = \{ V_{AD} : [V_{AD}, O_{AD}] = 0 \}
\]
is nontrivial, and any unitary \(V_{AD}\) in the centralizer leaves the expectation value of \(O_{AD}\) unchanged. Thus shadow inversion can only recover the global unitary up to equivalence classes modulo \(C_{O_{AD}}\) \cite{Navascues2015,Skrzypczyk2010}. As shown in Lemma~\ref{lem:non_product_centralizer}, such centralizers can contain genuinely non-product unitaries for observables like \(Z_A\otimes Z_D\) on GHZ or W states, and Lemma~\ref{lem:centralizer_shadow_equivalence} further establishes that any centralizer element generates a shadow-equivalence class. This ambiguity is well understood in unconstrained settings and tripartite cases \cite{Chen2024,Liu2024}, but the four-partite chain case has not been systematically analyzed, particularly regarding how the causal structure of the protocol interacts with the centralizer-induced equivalence.

In this paper, we address this gap by formulating the task using Markov-admissible supermaps in Definition~\ref{def:markov_admissible_supermap}. The well-definedness of this realization is established in Proposition~\ref{prop:markov_realization_well_defined}, and the shadow correctness condition is characterized in Eq.~\eqref{eq:shadow_condition_recalled}. We introduce the Markov-implementable centralizer \(C_{O_{AD}}^{\mathrm{Markov}}\) in Eq.~\eqref{eq:markov_centralizer} to capture the endpoint gauge transformations realizable under the Markov constraint, and define the four-body Markov shadow inversion protocol in Definition~\ref{def:markov_shadow_inversion}. We show that unrestricted endpoint post-processing is too broad in Proposition~\ref{prop:endpoint_postprocessing_closure} and therefore impose an endpoint-local refinement in Definition~\ref{def:endpoint_local_markov}. Under this condition, we prove in Theorem~\ref{thm:main_endpoint_local_reduction} that every implementable endpoint unitary must factorize across \(A|D\). Consequently, whenever the full centralizer contains non-product unitaries, the Markov constraint strictly reduces the shadow-equivalence class in Corollary~\ref{cor:strict_shadow_reduction}, thereby enhancing identifiability. We illustrate our findings with four-qubit GHZ and W states, and formulate the feasibility of admissible protocols as a semidefinite program in Eq.~\eqref{eq:sdp_feasibility_markov_shadow}.

The rest of the paper is organized as follows. Section~\ref{sec:pro setup} introduces the problem setup and the shadow inversion condition. Section~\ref{subsec:markov_admissible_shadow} introduces the Markov-admissible architecture, the centralizer formalism, and the SDP formulation. Section~\ref{subsec:endpoint_local_reduction} proves the main theorem on endpoint-local reduction, and Section~\ref{sec:conclusion} concludes with discussions and open problems.

\section{Problem Setup}
\label{sec:pro setup}

We study shadow unitary inversion, a relaxed form of
unitary reversal in which the goal is not to fully invert an unknown
unitary $U$, but only to reproduce its effect on a fixed observable
$O$.

A quantum protocol $\mathcal N$ is called a $t$-query shadow inversion
protocol for $d$-dimensional unitaries with respect to an observable
$O$ if it is allowed to query an unknown unitary $U\in U(d)$ exactly
$t$ times. Here, ``querying'' $U$ means applying the unitary channel
$\mathcal{U}(\cdot)=U(\cdot)U^\dagger$ to a quantum state; in a circuit
picture, this corresponds to inserting the unknown unitary gate into
$t$ distinct slots. The number $t$ is fixed in advance, and the
protocol contains no measurement-based feedback that would alter the
number of applications of $U$. The protocol satisfies
\begin{equation}
\operatorname{Tr}\!\left[\mathcal N_U(\rho)O\right]
=
\operatorname{Tr}\!\left[U^\dagger \rho U O\right],
\qquad
\forall\,\rho,\quad \forall\,U\in U(d).
\label{eq:t_query_shadow_inversion}
\end{equation}
This condition will be referred to as the shadow inversion condition.
This is a natural relaxation in settings where only partial (shadow) information about the system is required.

In many multipartite quantum information tasks, one is not required to fully
recover an unknown global unitary evolution, but only to reproduce its action
on certain reduced observables or conditional subsystems.
This motivates a relaxed notion of unitary inversion that is compatible with
Markovian structures and shadow information.
In particular, for four-partite systems, such a relaxation should go beyond a
straightforward extension of the tripartite case, and explicitly capture the
independent roles of the four subsystems.

Let $A,B,C,D$ be four quantum systems with Hilbert spaces
$\mathcal{H}_A,\mathcal{H}_B,\mathcal{H}_C,\mathcal{H}_D$, and let
$O_{AD}$ be a fixed observable acting on $\mathcal{H}_A \otimes \mathcal{H}_D$. A quantum supermap \cite{1} $\mathcal{N}$ is called a \emph{four-body shadow Markov inversion}
with respect to $O_{AD}$ if, for any unitary $U_{ABCD}$ acting on
$\mathcal{H}_A \otimes \mathcal{H}_B \otimes \mathcal{H}_C \otimes \mathcal{H}_D$
and for any input state $\rho_{ABCD}$, the induced channel
$\mathcal{N}_U$ satisfies
\begin{equation}
\operatorname{Tr}\!\left[
\mathcal{N}_U(\rho_{ABCD})\, O_{AD}
\right]
=
\operatorname{Tr}\!\left[
U_{ABCD}^\dagger \rho_{ABCD} U_{ABCD}\, O_{AD}
\right].
\label{eq:four_body_shadow_inversion}
\end{equation}

\begin{example}
\label{ex:one_slot_supermap}
A one-slot quantum supermap \cite{2} (also called a 1-comb) is represented by
a positive operator $W$ acting on
$\mathcal{H}_{\text{out}} \otimes \mathcal{H}_{\text{in}}$,
such that for any input channel $\mathcal{E}$ with Choi operator $J_{\mathcal{E}}$,
the output channel $\mathcal{S}(\mathcal{E})$ has Choi operator
\begin{equation}
J_{\mathcal{S}(\mathcal{E})}
=
\operatorname{Tr}_{\text{in}}
\bigl[
W (J_{\mathcal{E}} \otimes I)
\bigr].
\end{equation}
\end{example}

Equation \eqref{eq:four_body_shadow_inversion} does not require
$\mathcal{N}_U(\rho_{ABCD}) = U_{ABCD}^\dagger \rho_{ABCD} U_{ABCD}$.
Instead, it enforces correctness only at the level of the observable $O_{AD}$,
thereby defining a shadow notion of unitary inversion.
All information orthogonal to $O_{AD}$ is allowed to be distorted. Here, orthogonal information refers to all state or operator components
that make no contribution to the expectation value of the observable $O_{AD}$,
i.e., all $X$ such that $\operatorname{Tr}(X O_{AD}) = 0$.

Multipartite entangled states with high symmetry provide a natural testing
ground for the concept of shadow Markov inversion.
In particular, four-qubit GHZ and W states represent two inequivalent classes
of genuine multipartite entanglement, characterized by fundamentally different
correlation and symmetry structures.
In this section, we analyze how these differences manifest themselves in the
existence and structure of four-body shadow Markov inversion protocols.

We begin with the four-qubit GHZ state
\begin{equation}
\ket{\mathrm{GHZ}} = \frac{1}{\sqrt{2}}\bigl(\ket{0000} + \ket{1111}\bigr).
\end{equation}
We consider an observable $O_{AD}$ acting on the subsystems $A$ and $D$ diagonal
in the computational basis, for example, $O_{AD} = Z_A \otimes Z_D$. Here $O_{AD}$ denotes a fixed Hermitian observable acting on
$\mathcal H_A \otimes \mathcal H_D$. The reduced state of $\ket{\mathrm{GHZ}}$ on subsystems $AD$ is supported on the
two-dimensional subspace spanned by $\{\ket{00}, \ket{11}\}$.
As a consequence, the observable $O_{AD}$ exhibits spectral degeneracy on the
relevant support, leading to a large centralizer $C_{O_{AD}} = \{V_{AD} : [V_{AD}, O_{AD}] = 0\}$. 
This enlarged symmetry implies that many distinct global unitary evolutions
$U_{ABCD}$ become indistinguishable at the level of the shadow observable
$O_{AD}$. Consequently, the condition \eqref{eq:four_body_shadow_inversion}
admits a broad family of solutions. In particular, a shadow Markov inversion can be realized without uniquely recovering the action of $U_{ABCD}$ on the full four-partite system.

More precisely, two global unitaries $U_{ABCD}$ and $U'_{ABCD}$ are
indistinguishable with respect to the shadow observable $O_{AD}$ if
\begin{equation}
\operatorname{Tr}\!\left[
U_{ABCD}^\dagger \rho_{ABCD} U_{ABCD}\, O_{AD}
\right]
=
\operatorname{Tr}\!\left[
U_{ABCD}'^\dagger \rho_{ABCD} U_{ABCD}'\, O_{AD}
\right]
\quad \forall \rho_{ABCD}.
\end{equation}
For observables with degenerate spectrum on the relevant support,
this defines a nontrivial equivalence class of unitaries modulo
the centralizer $C_{O_{AD}}$.

This highlights a fundamental distinction between full unitary inversion
and shadow Markov inversion.
By full unitary inversion we mean the reconstruction of the entire
adjoint action of $U_{ABCD}$, namely the requirement that the recovery
map $\mathcal N_U$ satisfies
\begin{equation}
\mathcal N_U(\rho_{ABCD}) = U_{ABCD}^\dagger \rho_{ABCD} U_{ABCD}
\quad \forall \rho_{ABCD}.
\end{equation}
While full unitary inversion requires the reconstruction of the complete action
of $U_{ABCD}$ on $\mathcal H_A \otimes \mathcal H_B \otimes
\mathcal H_C \otimes \mathcal H_D$, the shadow condition only constrains
expectation values of a fixed observable $O_{AD}$.
As a result, the recovery map need not reproduce the full unitary evolution,
but only its projection onto the algebra generated by $O_{AD}$.

In the four-qubit GHZ state, the reduced state on $AD$ is supported on the
two-dimensional subspace $\mathrm{span}\{\ket{00},\ket{11}\}$.
For diagonal observables $O_{AD}$, this induces spectral degeneracy on the
relevant support and hence a nontrivial centralizer
$C_{O_{AD}} = \{V_{AD} : [V_{AD}, O_{AD}] = 0\}$.

As a consequence, for any $V_{AD}\in C_{O_{AD}}$ and any global unitary
$U_{ABCD}$, the unitaries $U_{ABCD}$ and $(V_{AD}\otimes I_{BC})U_{ABCD}$
are shadow-equivalent.
Therefore, shadow Markov inversion recovers the global evolution only up to
an equivalence class modulo $C_{O_{AD}}$, rather than a unique unitary.

\begin{example}[A simple shadow-equivalent pair]
Consider the four-qubit GHZ state
\begin{equation}
\ket{\mathrm{GHZ}} = \frac{1}{\sqrt{2}}(\ket{0000}+\ket{1111}).
\end{equation}
We consider the shadow observable $O_{AD}=Z_A\otimes Z_D$.
Let
\begin{equation}
V_{AD} = e^{i\theta Z_A}\otimes I_D,
\end{equation}
which satisfies $[V_{AD},O_{AD}]=0$, and hence $V_{AD}\in C_{O_{AD}}$.

For an arbitrary global unitary $U_{ABCD}$, define
\begin{equation}
U'_{ABCD} := (V_{AD}\otimes I_{BC})\,U_{ABCD}.
\end{equation}
Then, for any input state $\rho_{ABCD}$, we have
\begin{equation}
\operatorname{Tr}\!\left[
U_{ABCD}\,\rho_{ABCD}\,U_{ABCD}^\dagger\, O_{AD}
\right]
=
\operatorname{Tr}\!\left[
U'_{ABCD}\,\rho_{ABCD}\,U_{ABCD}'^\dagger\, O_{AD}
\right].
\end{equation}
Therefore, $U_{ABCD}$ and $U'_{ABCD}$ are shadow-equivalent with respect
to the observable $O_{AD}$.
\end{example}

Let the shadow observable be
\begin{equation}
O_{AD} = Z_A \otimes Z_D .
\end{equation}
Since $O_{AD}$ is diagonal in the computational basis, the centralizer
\[
C_{O_{AD}}=\{V_{AD}:[V_{AD},O_{AD}]=0\}
\]
is nontrivial and contains local phase rotations such as
\begin{equation}
e^{i\theta Z_A}\otimes e^{i\phi Z_D},
\qquad \theta,\phi\in\mathbb{R}.
\end{equation}
In general, $C_{O_{AD}}$ is not restricted to product unitaries.

For
$O_{AD}=Z_A\otimes Z_D$, the eigenvalues $\pm1$ are each two-fold degenerate,
with eigenspaces $\mathcal H_{+}=\mathrm{span}\{\ket{00},\ket{11}\}$ and
$\mathcal H_{-}=\mathrm{span}\{\ket{01},\ket{10}\}$. Hence any unitary that is
block-diagonal with respect to $\mathcal H_{+}\oplus \mathcal H_{-}$ commutes
with $O_{AD}$, including genuinely non-product unitaries acting within each
degenerate eigenspace.

For example, we define
\begin{equation}
V_{AD}(\phi):=\exp\!\bigl(i\phi(\ket{00}\!\bra{11}+\ket{11}\!\bra{00})\bigr).
\end{equation}
Let $G:=\ket{00}\!\bra{11}+\ket{11}\!\bra{00}$. Since $G$ maps $\mathcal H_{+}$ to
itself and annihilates $\mathcal H_{-}$, while $O_{AD}$ acts as $+I$ on
$\mathcal H_{+}$ and $-I$ on $\mathcal H_{-}$, we have $[G,O_{AD}]=0$, and thus
\begin{equation}
[V_{AD}(\phi),O_{AD}]=0,
\end{equation}
i.e., $V_{AD}(\phi)\in C_{O_{AD}}$.

Moreover, $V_{AD}(\phi)$ is not a product unitary for
$\phi\not\equiv 0\ (\mathrm{mod}\ \pi)$, since
\begin{equation}
V_{AD}(\phi)\ket{00}=\cos\phi\,\ket{00}+i\sin\phi\,\ket{11},
\end{equation}
which is entangled (Schmidt rank $2$) whenever $\sin\phi\neq 0$.
A product unitary $U_A\otimes U_D$ cannot map a product state to an entangled
state, hence $V_{AD}(\phi)\neq U_A\otimes U_D$.

Since $O_{AD}=Z_A\otimes Z_D$ has two degenerate eigenspaces
$\mathrm{span}\{\ket{00},\ket{11}\}$ and $\mathrm{span}\{\ket{01},\ket{10}\}$,
its centralizer $C_{O_{AD}}$ is nontrivial.
In particular, besides trivial scalar phases, it contains local phase
rotations of the form $e^{i\theta Z_A}\otimes e^{i\phi Z_D}$.
Moreover, due to the degeneracy, $C_{O_{AD}}$ also includes genuinely
non-product unitaries acting within each degenerate eigenspace.
One explicit example is
\begin{equation}
V_{AD}(\phi):=\exp\!\bigl(i\phi(\ket{00}\!\bra{11}+\ket{11}\!\bra{00})\bigr),
\end{equation}
which satisfies $[V_{AD}(\phi),O_{AD}]=0$ but cannot be written as a
tensor product of single-qubit unitaries.

Using $\ket{00}\!\bra{11}+\ket{11}\!\bra{00}
=\tfrac12(X_A\otimes X_D-Y_A\otimes Y_D)$, we can write
\begin{equation}
V_{AD}(\phi)=\exp\!\left(i\frac{\phi}{2}(X_A X_D - Y_A Y_D)\right),
\end{equation}
which is a genuinely non-product unitary. Hence it cannot be expressed as
$e^{i\theta Z_A}\otimes e^{i\varphi Z_D}$ unless $\phi=0$.

\begin{lemma}[A non-product element of the centralizer]
Let
\begin{equation}
V_{AD}(\phi):=\exp\!\Bigl(i\phi(\ket{00}\!\bra{11}+\ket{11}\!\bra{00})\Bigr).
\end{equation}
For $\phi\not\equiv 0\ (\mathrm{mod}\ \pi/2)$, the unitary $V_{AD}(\phi)$ cannot be
written in the product form $U_A\otimes U_D$.
\end{lemma}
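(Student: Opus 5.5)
The plan is to exhibit a single product input that $V_{AD}(\phi)$ maps to an entangled state. A product unitary $U_A\otimes U_D$ always maps product vectors to product vectors, so this is enough. The steps, in order:

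First, compute $V_{AD}(\phi)$ in closed form. Let $G=\ket{00}\!\bra{11}+\ket{11}\!\bra{00}$ as above and $P=\ket{00}\!\bra{00}+\ket{11}\!\bra{11}$ be the projector onto $\mathcal H_{+}$. Then $G^2=P$ and $G^{2k+1}=G$. Summing the exponential series gives
\begin{equation}
V_{AD}(\phi)=(I-P)+\cos\phi\,P+i\sin\phi\,G .
\end{equation}
In particular, as already noted in the text,
\begin{equation}
V_{AD}(\phi)\ket{00}=\cos\phi\,\ket{00}+i\sin\phi\,\ket{11}.
\end{equation}

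Second, show that this vector is entangled exactly when $\phi\not\equiv 0 \pmod{\pi/2}$. Write a two-qubit vector $\sum_{jk}c_{jk}\ket{jk}$ via its coefficient matrix $(c_{jk})$. Here that matrix is $\mathrm{diag}(\cos\phi,\, i\sin\phi)$. The vector is a product vector iff this matrix has rank one, i.e.\ iff its determinant $i\sin\phi\cos\phi$ vanishes. For $\phi\not\equiv 0\pmod{\pi/2}$ both $\sin\phi$ and $\cos\phi$ are nonzero, so the Schmidt rank is $2$.

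Third, argue by contradiction. Suppose $V_{AD}(\phi)=U_A\otimes U_D$. Then
\begin{equation}
V_{AD}(\phi)\ket{00}=(U_A\ket{0})\otimes(U_D\ket{0}),
\end{equation}
which is a product vector and so has Schmidt rank $1$. This contradicts the second step, so $V_{AD}(\phi)$ is not of product form.

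There is no real obstacle. The only point needing care is the hypothesis. At $\phi\equiv 0\pmod{\pi/2}$, the witness $\ket{00}$ is mapped to the product vector $\ket{00}$ or $\pm i\ket{11}$. So the modulus $\pi/2$ in the statement is exactly what is needed for this particular witness to work. The statement does not claim anything at those excluded values, so nothing more is needed; in particular, the earlier informal remark stated modulo $\pi$ is not used.
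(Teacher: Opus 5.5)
Your proof is correct and follows the same route as the paper: you compute $V_{AD}(\phi)\ket{00}=\cos\phi\,\ket{00}+i\sin\phi\,\ket{11}$ from the power series of $G$, note that this vector has Schmidt rank $2$ when $\phi\not\equiv 0\ (\mathrm{mod}\ \pi/2)$, and conclude because a product unitary must send $\ket{00}$ to a product vector. Your determinant check on the coefficient matrix makes explicit the Schmidt-rank claim that the paper states without computation.
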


\begin{proof}
Define the Hermitian operator $G:=\ket{00}\!\bra{11}+\ket{11}\!\bra{00}$.
One checks that $G^2=\ket{00}\!\bra{00}+\ket{11}\!\bra{11}$ and hence
$G$ acts as a Pauli-$X$ on the subspace $\mathrm{span}\{\ket{00},\ket{11}\}$
and vanishes on $\mathrm{span}\{\ket{01},\ket{10}\}$.
Therefore, we claim
\begin{equation}
\label{eq:V_AD}
V_{AD}(\phi)\ket{00}
=\cos\phi\,\ket{00}+i\sin\phi\,\ket{11}.
\end{equation}
The claim can be proven as follows. We compute $V_{AD}(\phi)\ket{00}=e^{i\phi G}\ket{00}$ by expanding the exponential.
First observe that
\begin{equation}
G\ket{00}
=(\ket{00}\!\bra{11}+\ket{11}\!\bra{00})\ket{00}
=\ket{11},
\qquad
G^2\ket{00}=G\ket{11}=\ket{00}.
\end{equation}
Hence, for all $n\ge 0$,
\begin{equation}
G^{2n}\ket{00}=\ket{00},\qquad G^{2n+1}\ket{00}=\ket{11}.
\end{equation}
Using the power-series expansion,
\begin{align}
V_{AD}(\phi)\ket{00}
&=e^{i\phi G}\ket{00}
=\sum_{k=0}^{\infty}\frac{(i\phi)^k}{k!}\,G^k\ket{00} \nonumber\\
&=\sum_{n=0}^{\infty}\frac{(i\phi)^{2n}}{(2n)!}\,G^{2n}\ket{00}
+\sum_{n=0}^{\infty}\frac{(i\phi)^{2n+1}}{(2n+1)!}\,G^{2n+1}\ket{00} \nonumber\\
&=\left(\sum_{n=0}^{\infty}\frac{(i\phi)^{2n}}{(2n)!}\right)\ket{00}
+\left(\sum_{n=0}^{\infty}\frac{(i\phi)^{2n+1}}{(2n+1)!}\right)\ket{11} \nonumber\\
&=\cos\phi\,\ket{00}+i\sin\phi\,\ket{11},
\end{align}
where we used $\sum_{n\ge0}\frac{(i\phi)^{2n}}{(2n)!}=\cos\phi$ and
$\sum_{n\ge0}\frac{(i\phi)^{2n+1}}{(2n+1)!}=i\sin\phi$. We have proven \eqref{eq:V_AD}.
For $\phi\not\equiv 0\ (\mod \pi/2)$, this state has Schmidt rank $2$
(with respect to the bipartition $A|D$), hence it is entangled. Hence $V_{AD}(\phi)$ is not a product unitary.
\end{proof}

\begin{lemma}[A non-product element of the centralizer]
\label{lem:non_product_centralizer}
Let
\begin{equation}
V_{AD}(\phi):=\exp\!\Bigl(i\phi(\ket{00}\!\bra{11}+\ket{11}\!\bra{00})\Bigr).
\end{equation}
For $\phi\not\equiv 0\ (\mathrm{mod}\ \pi/2)$, the unitary $V_{AD}(\phi)$ cannot be
written in the product form $U_A\otimes U_D$.
\end{lemma}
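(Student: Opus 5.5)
The plan is to argue by contradiction through the entangling power of $V_{AD}(\phi)$: a product unitary $U_A\otimes U_D$ maps every product vector $\ket{a}\otimes\ket{d}$ to the product vector $U_A\ket{a}\otimes U_D\ket{d}$. It therefore suffices to exhibit a single product input that $V_{AD}(\phi)$ sends to an entangled vector. I will use the input $\ket{00}$, exactly as in the computation preceding the statement.

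\textbf{Step 1 (spectral structure of the generator).} Set $G:=\ket{00}\!\bra{11}+\ket{11}\!\bra{00}$. I will record that $G\ket{00}=\ket{11}$, $G\ket{11}=\ket{00}$, and that $G$ annihilates $\mathrm{span}\{\ket{01},\ket{10}\}$. Consequently $G^{2n}\ket{00}=\ket{00}$ and $G^{2n+1}\ket{00}=\ket{11}$ for all $n\ge0$.

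\textbf{Step 2 (action on $\ket{00}$).} Splitting the power series of $e^{i\phi G}$ into even and odd parts gives
\begin{equation}
V_{AD}(\phi)\ket{00}=\cos\phi\,\ket{00}+i\sin\phi\,\ket{11}.
\end{equation}
This is the identity \eqref{eq:V_AD} already established in the previous lemma, so I would simply invoke it.

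\textbf{Step 3 (Schmidt rank).} The coefficient matrix of this vector in the computational basis of $A\otimes D$ is $\mathrm{diag}(\cos\phi,\ i\sin\phi)$. Its rank is $2$ precisely when $\cos\phi\neq0$ and $\sin\phi\neq0$, that is, when $\phi\not\equiv0 \pmod{\pi/2}$. Hence, under the hypothesis, $V_{AD}(\phi)\ket{00}$ has Schmidt rank $2$ across $A|D$ and is entangled.

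\textbf{Step 4 (conclusion).} If $V_{AD}(\phi)=U_A\otimes U_D$, then $V_{AD}(\phi)\ket{00}=U_A\ket0\otimes U_D\ket0$ would have Schmidt rank $1$. This contradicts Step 3, so $V_{AD}(\phi)$ is not of product form.

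The only point requiring care is Step 3: one must make sure the hypothesis ``mod $\pi/2$'' is exactly what makes both coefficients nonzero. The Schmidt-rank argument genuinely fails at $\phi=\pi/2$, where $V\ket{00}=i\ket{11}$ is a product vector. In that case non-productness would need a separate argument, for instance a basis-permutation argument using that $\ket{01}$ is fixed while $\ket{00}\mapsto i\ket{11}$. The statement excludes that case, so the argument above covers the claim as stated.
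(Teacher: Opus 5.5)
Your proposal is correct and matches the paper's proof: both expand $e^{i\phi G}$ into even and odd powers to obtain $V_{AD}(\phi)\ket{00}=\cos\phi\,\ket{00}+i\sin\phi\,\ket{11}$, both observe that this vector has Schmidt rank $2$ across $A|D$ when $\phi\not\equiv 0\ (\mathrm{mod}\ \pi/2)$, and both conclude from the fact that a product unitary cannot map a product vector to an entangled one. Your coefficient-matrix check with $\mathrm{diag}(\cos\phi,\, i\sin\phi)$ makes precise the rank claim that the paper only asserts, and your remark about $\phi=\pi/2$ is accurate but not needed for the statement as given.
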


\begin{proof}
Define the Hermitian operator $G:=\ket{00}\!\bra{11}+\ket{11}\!\bra{00}$.
One checks that $G^2=\ket{00}\!\bra{00}+\ket{11}\!\bra{11}$ and hence
$G$ acts as a Pauli-$X$ on the subspace $\mathrm{span}\{\ket{00},\ket{11}\}$
and vanishes on $\mathrm{span}\{\ket{01},\ket{10}\}$.
Therefore, we claim
\begin{equation}
\label{eq:V_AD}
V_{AD}(\phi)\ket{00}
=\cos\phi\,\ket{00}+i\sin\phi\,\ket{11}.
\end{equation}
The claim can be proven as follows. We compute $V_{AD}(\phi)\ket{00}=e^{i\phi G}\ket{00}$ by expanding the exponential.
First observe that
\begin{equation}
G\ket{00}
=(\ket{00}\!\bra{11}+\ket{11}\!\bra{00})\ket{00}
=\ket{11},
\qquad
G^2\ket{00}=G\ket{11}=\ket{00}.
\end{equation}
Hence, for all $n\ge 0$,
\begin{equation}
G^{2n}\ket{00}=\ket{00},\qquad G^{2n+1}\ket{00}=\ket{11}.
\end{equation}
Using the power-series expansion,
\begin{align}
V_{AD}(\phi)\ket{00}
&=e^{i\phi G}\ket{00}
=\sum_{k=0}^{\infty}\frac{(i\phi)^k}{k!}\,G^k\ket{00} \nonumber\\
&=\sum_{n=0}^{\infty}\frac{(i\phi)^{2n}}{(2n)!}\,G^{2n}\ket{00}
+\sum_{n=0}^{\infty}\frac{(i\phi)^{2n+1}}{(2n+1)!}\,G^{2n+1}\ket{00} \nonumber\\
&=\left(\sum_{n=0}^{\infty}\frac{(i\phi)^{2n}}{(2n)!}\right)\ket{00}
+\left(\sum_{n=0}^{\infty}\frac{(i\phi)^{2n+1}}{(2n+1)!}\right)\ket{11} \nonumber\\
&=\cos\phi\,\ket{00}+i\sin\phi\,\ket{11},
\end{align}
where we used $\sum_{n\ge0}\frac{(i\phi)^{2n}}{(2n)!}=\cos\phi$ and
$\sum_{n\ge0}\frac{(i\phi)^{2n+1}}{(2n+1)!}=i\sin\phi$. We have proven \eqref{eq:V_AD}.
For $\phi\not\equiv 0\ (\mod \pi/2)$, this state has Schmidt rank $2$
(with respect to the bipartition $A|D$), hence it is entangled. Hence $V_{AD}(\phi)$ is not a product unitary.
\end{proof}

\begin{proposition}[GHZ-induced centralizer structure]
\label{prop:ghz_centralizer}
Let $\ket{\mathrm{GHZ}_4} = \frac{1}{\sqrt{2}}(\ket{0000}+\ket{1111})$ be the four-qubit GHZ state, and let $O_{AD}=Z_A\otimes Z_D$ be the endpoint observable. Then:
\begin{enumerate}
\item The reduced state on $AD$ is $\rho_{AD} = \frac{1}{2}(\ket{00}\bra{00}+\ket{11}\bra{11})$.
\item The centralizer $C_{O_{AD}}$ contains the non-product unitary $V_{AD}(\phi)$ defined in Lemma~\ref{lem:non_product_centralizer}.
\item For any $V_{AD}\in C_{O_{AD}}$, the unitaries $U_{ABCD}$ and $(V_{AD}\otimes I_{BC})U_{ABCD}$ are shadow-equivalent.
\end{enumerate}
\end{proposition}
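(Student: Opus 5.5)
The three items are independent, so I will prove them one at a time by direct calculation, relying on Lemma~\ref{lem:non_product_centralizer} for the only non-routine fact.

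\emph{Item 1.} I will write $\ket{\mathrm{GHZ}_4}\bra{\mathrm{GHZ}_4}$ as a sum of four terms $\tfrac12\ket{aaaa}\bra{bbbb}$ with $a,b\in\{0,1\}$. Tracing out $BC$ multiplies each term by $\langle bb|aa\rangle_{BC}=\delta_{ab}$, which kills the off-diagonal terms. What remains is $\rho_{AD}=\tfrac12(\ket{00}\bra{00}+\ket{11}\bra{11})$.

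\emph{Item 2.} I will use the generator $G=\ket{00}\bra{11}+\ket{11}\bra{00}$. It maps $\mathcal H_+=\mathrm{span}\{\ket{00},\ket{11}\}$ into itself and annihilates $\mathcal H_-=\mathrm{span}\{\ket{01},\ket{10}\}$. Since $O_{AD}=Z_A\otimes Z_D$ acts as $+I$ on $\mathcal H_+$ and $-I$ on $\mathcal H_-$, one gets $GO_{AD}=G=O_{AD}G$, so $[G,O_{AD}]=0$. Every power $G^k$ then commutes with $O_{AD}$, and by the power series so does $V_{AD}(\phi)=e^{i\phi G}$. It is unitary because $G$ is Hermitian. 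Non-productness for $\phi\not\equiv 0 \pmod{\pi/2}$ is exactly Lemma~\ref{lem:non_product_centralizer}, which I will simply invoke.

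\emph{Item 3.} I will interpret shadow equivalence as in the paper's definition, with the adjoint action matching the example, i.e.\ $\rho\mapsto U\rho U^\dagger$. Set $U'=(V_{AD}\otimes I_{BC})U$. Then
\[
\operatorname{Tr}[U'\rho U'^\dagger O_{AD}]=\operatorname{Tr}[U\rho U^\dagger (V_{AD}^\dagger O_{AD}V_{AD}\otimes I_{BC})]
\]
by cyclicity of the trace. Here $O_{AD}$ is shorthand for $O_{AD}\otimes I_{BC}$ with suitable reordering of factors. Since $V_{AD}$ commutes with $O_{AD}$ and is unitary, $V_{AD}^\dagger O_{AD}V_{AD}=O_{AD}$, and the two expectation values agree for all $\rho$.

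If the inverse convention $U^\dagger\rho U$ of Eq.~\eqref{eq:four_body_shadow_inversion} is intended instead, the natural choice is $U'=U(V_{AD}\otimes I_{BC})^\dagger$, or right multiplication. The same cyclicity argument then applies verbatim, and I would remark on this. The only real subtlety I expect is this convention issue about which side $V_{AD}$ multiplies. Everything else is routine.
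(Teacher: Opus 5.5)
Your proposal is correct and follows the paper's route: item 1 by the direct partial trace over $BC$, item 2 by the commutation $[G,O_{AD}]=0$ combined with Lemma~\ref{lem:non_product_centralizer}, and item 3 by the cyclicity-plus-commutation argument that the paper packages as Lemma~\ref{lem:centralizer_shadow_equivalence}. Your caution about conventions is justified, and it is sharper than the paper's own proof. That proof cites the lemma, which is stated for right multiplication $U_{ABCD}(V_{AD}\otimes I_{BC})$ under the $U^\dagger\rho U$ convention of the paper's indistinguishability definition, to justify a left-multiplied claim that genuinely fails under that same convention. For instance, the Hadamard $U_{ABCD}=H_A\otimes I_{BCD}$, $V_{AD}=e^{i\pi Z_A/4}\otimes I_D$ and $\rho=\ket{+}\bra{+}_A\otimes\ket{000}\bra{000}_{BCD}$ give expectation values $1$ and $0$. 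So item 3 holds only in your $U\rho U^\dagger$ reading or in your right-multiplication fallback.
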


\begin{proof}
The reduced state follows by tracing out subsystems $B$ and $C$ from $\ket{\mathrm{GHZ}_4}$. The non-product centralizer element is given by Lemma~\ref{lem:non_product_centralizer}. The shadow equivalence follows from Lemma~\ref{lem:centralizer_shadow_equivalence} in Section III.
\end{proof}

\begin{lemma}[Centralizer of \(Z_A\otimes Z_D\)]
\label{lem:centralizer_zd}
For the observable \(O_{AD}=Z_A\otimes Z_D\), the centralizer consists exactly of all unitaries that are block-diagonal with respect to the decomposition
\[
\mathcal H_A\otimes\mathcal H_D
=
\mathcal H_+ \oplus \mathcal H_-,
\]
where \(\mathcal H_+=\operatorname{span}\{\ket{00},\ket{11}\}\) and \(\mathcal H_-=\operatorname{span}\{\ket{01},\ket{10}\}\). In particular, every \(V_{AD}\in C_{O_{AD}}\) can be written as
\[
V_{AD}=U_+\oplus U_-
\]
with \(U_+\in U(\mathcal H_+)\) and \(U_-\in U(\mathcal H_-)\).
\end{lemma}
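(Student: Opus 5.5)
The plan is a direct spectral argument. I will write $O_{AD}=Z_A\otimes Z_D$ in its spectral decomposition $O_{AD}=P_+-P_-$. Here $P_\pm$ are the orthogonal projections onto $\mathcal H_\pm$, and I will verify this by checking that $Z_A\otimes Z_D\ket{ij}=(-1)^{i+j}\ket{ij}$ in the computational basis. Since $P_++P_-=I_{AD}$ and $P_+P_-=0$, the decomposition $\mathcal H_A\otimes\mathcal H_D=\mathcal H_+\oplus\mathcal H_-$ is orthogonal, and $O_{AD}$ acts as $+I$ on $\mathcal H_+$ and as $-I$ on $\mathcal H_-$.

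\emph{Forward inclusion.} Let $V_{AD}$ be a unitary with $[V_{AD},O_{AD}]=0$. I will first show that $V_{AD}$ commutes with both projections, using $P_\pm=\tfrac12(I\pm O_{AD})$, which holds because $O_{AD}^2=I$. It follows that $V_{AD}$ leaves $\mathcal H_+$ and $\mathcal H_-$ invariant, so the off-diagonal blocks $P_+V_{AD}P_-$ and $P_-V_{AD}P_+$ vanish. Hence $V_{AD}=P_+V_{AD}P_++P_-V_{AD}P_-$. Next I will set $U_\pm:=V_{AD}|_{\mathcal H_\pm}$, regarded as operators on $\mathcal H_\pm$. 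To see that each $U_\pm$ is unitary, I will compress the identities $V_{AD}^\dagger V_{AD}=V_{AD}V_{AD}^\dagger=I$ by $P_\pm$; this gives $U_\pm^\dagger U_\pm=U_\pm U_\pm^\dagger=I_{\mathcal H_\pm}$, since the off-diagonal blocks are zero. This establishes $V_{AD}=U_+\oplus U_-$.

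\emph{Reverse inclusion.} Any $U_+\oplus U_-$ with $U_\pm\in U(\mathcal H_\pm)$ is unitary on $\mathcal H_A\otimes\mathcal H_D$. It commutes with $O_{AD}=I_{\mathcal H_+}\oplus(-I_{\mathcal H_-})$, because scalar multiples of the identity on each block commute with anything acting within that block. Together the two inclusions give the claimed characterization. As a consistency check, I will note that $V_{AD}(\phi)$ from Lemma~\ref{lem:non_product_centralizer} corresponds to $U_+=e^{i\phi X}$ in the basis $\{\ket{00},\ket{11}\}$ and $U_-=I$.

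The only step requiring care is the claim that commuting with $O_{AD}$ forces block-diagonality. I expect to handle it with the polynomial expression for $P_\pm$ above. Alternatively, I would use the eigenvector argument: if $O_{AD}\ket{\psi}=\lambda\ket{\psi}$, then $O_{AD}V_{AD}\ket{\psi}=\lambda V_{AD}\ket{\psi}$, so $V_{AD}$ preserves each eigenspace. Everything else in the proof is routine bookkeeping.
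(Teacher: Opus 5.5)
Your proof is correct and complete. The paper states this lemma without proof. Its only supporting reasoning is the earlier remark in the GHZ discussion that $O_{AD}$ acts as $+I$ on $\mathcal H_+$ and $-I$ on $\mathcal H_-$, so every block-diagonal unitary commutes with it. That remark is exactly your reverse inclusion, so your approach agrees with the paper's eigenspace viewpoint.

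The paper never argues the forward inclusion, which the word ``exactly'' in the statement requires: commuting with $O_{AD}$ forces block-diagonality. You supply it cleanly in two steps. First, $P_\pm=\tfrac12(I\pm O_{AD})$ gives $[V_{AD},P_\pm]=0$. Second, compressing $V_{AD}^\dagger V_{AD}=V_{AD}V_{AD}^\dagger=I$ by $P_\pm$ shows the blocks $U_\pm$ are unitary. Your consistency check $V_{AD}(\phi)=e^{i\phi X}\oplus I$ is also right.

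One small point is worth stating explicitly. The paper defines $C_{O_{AD}}=\{V_{AD}:[V_{AD},O_{AD}]=0\}$ without saying ``unitary,'' so the lemma should be read as describing the unitary part of the centralizer. Your forward argument, with the unitarity step dropped, also shows that the full commutant is the set of all block-diagonal operators $X_+\oplus X_-$.
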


\section{Markov-admissible shadow inversion on the chain $A\!-\!B\!-\!C\!-\!D$}
\label{subsec:markov_admissible_shadow}


The main outcome of this section is that the four-body shadow inversion
problem has been formulated as a Markov-admissible supermap problem on
the chain $A$--$B$--$C$--$D$. The required shadow correctness is given
by \eqref{eq:shadow_condition_recalled}, while the admissible Markov
realization is specified by \eqref{eq:markov_comb_realization}.
Together, these two conditions define the class of four-body Markov
shadow inversion protocols introduced in
Definition~\ref{def:markov_shadow_inversion}.This formulation also identifies the relevant source of ambiguity.
The full centralizer is defined in \eqref{eq:centralizer_def}, whereas
the Markov-implementable centralizer is defined in
\eqref{eq:markov_centralizer}. Accordingly, the unconstrained and
Markov-constrained shadow-equivalence orbits are given by
\eqref{eq:full_centralizer_orbit} and
\eqref{eq:markov_orbit_equivalence}, respectively. Thus, the shadow
inversion problem is reduced to understanding which elements of the
full centralizer remain implementable under the Markov constraint. Finally, the existence of admissible protocols is expressed through
the semidefinite feasibility formulation in
\eqref{eq:sdp_feasibility_markov_shadow}. The strict reduction of the
Markov-constrained orbit is not asserted in this section. It will be
proved in the next section after imposing the endpoint-local
Markov-admissibility condition.

\begin{table}[h]
\centering
\caption{Schematic summary of the proposed framework.}
\label{tab:framework_summary}
\renewcommand{\arraystretch}{1.25}
\begin{tabular}{p{0.22\linewidth} p{0.70\linewidth}}
\hline
\textbf{Step} & \textbf{Main idea} \\
\hline

Shadow inversion
&
Recover only the expectation value of the endpoint observable
$O_{AD}$ instead of reconstructing the whole unitary dynamics
(Eq.~\eqref{eq:four_body_shadow_inversion}). \\

Markov realization
&
Restrict the inversion protocol to a Markov-admissible
encoding--unitary--decoding architecture compatible with the chain
$A$--$B$--$C$--$D$. \\

Centralizer ambiguity
&
The observable induces a centralizer
$C_{O_{AD}}$, which determines the intrinsic gauge ambiguity of
shadow inversion. \\

Markov implementability
&
Introduce the Markov-implementable centralizer
$C_{O_{AD}}^{\mathrm{Markov}}$ to characterize the gauge
transformations realizable by Markov-admissible supermaps. \\

Main result
&
Prove that endpoint-local Markov admissibility reduces the effective
centralizer and therefore shrinks the shadow-equivalence class,
leading to improved identifiability. \\

\hline
\end{tabular}
\end{table}

We view the induced map as a CPTP channel
\begin{equation}
\mathcal N_U:\ \mathsf D(\mathcal H_{ABCD}) \longrightarrow \mathsf D(\mathcal H_{AD}),
\label{eq:type_NU}
\end{equation}
since the task only requires reproducing the statistics of a fixed shadow
observable $O_{AD}$ acting on $\mathcal H_A\otimes\mathcal H_D$.

The shadow correctness condition reads
\begin{equation}
\operatorname{Tr}\!\left[\mathcal N_U(\rho_{ABCD})\,O_{AD}\right]
=
\operatorname{Tr}\!\left[U_{ABCD}^\dagger \rho_{ABCD} U_{ABCD}\,O_{AD}\right],
\qquad \forall\,\rho_{ABCD},\ \forall\,U_{ABCD}.
\label{eq:shadow_condition_recalled}
\end{equation}

\begin{definition}[Markov-admissible supermap]
\label{def:markov_admissible_supermap}
A supermap $\mathcal N$ is \emph{Markov-admissible} along the chain
$A\!-\!B\!-\!C\!-\!D$ if there exist a memory system $M$ and CPTP maps
$\widetilde{\mathcal E}_{AB\to ABM}$ and $\mathcal D_{CDM\to AD}$ such that
for all input states $\rho_{ABCD}$,
\begin{equation}
\mathcal N_U(\rho_{ABCD})
=
\mathcal D_{CDM\to AD}\!\left[
\operatorname{Tr}_{AB}\!\left(
(\mathcal U_{ABCD}\otimes \mathcal I_M)
\bigl((\widetilde{\mathcal E}_{AB\to ABM}\otimes \mathcal I_{CD})(\rho_{ABCD})\bigr)
\right)\right],
\label{eq:markov_comb_realization}
\end{equation}

where $\mathcal U_{ABCD}(\cdot)=U_{ABCD}(\cdot)U_{ABCD}^\dagger$.
\end{definition}

\begin{proposition}[Well-definedness of the Markov-admissible realization]
\label{prop:markov_realization_well_defined}
For every unitary channel $\mathcal U_{ABCD}$, the realization
\eqref{eq:markov_comb_realization} defines a CPTP map
\begin{equation}
\mathcal N_U:
\mathsf D(\mathcal H_{ABCD})
\longrightarrow
\mathsf D(\mathcal H_{AD}).
\end{equation}
Equivalently, it can be written as the sequential composition
\begin{equation}
\mathcal N_U
=
\mathcal F_{ABCDM\to AD}
\circ
(\mathcal U_{ABCD}\otimes \mathcal I_M)
\circ
\mathcal E_{ABCD\to ABCDM},
\end{equation}
where
\begin{equation}
\mathcal E_{ABCD\to ABCDM}
:=
\widetilde{\mathcal E}_{AB\to ABM}
\otimes
\mathcal I_{CD},
\qquad
\mathcal F_{ABCDM\to AD}
:=
\mathcal D_{CDM\to AD}
\circ
\operatorname{Tr}_{AB}.
\end{equation}
\end{proposition}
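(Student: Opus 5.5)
The plan is to verify directly that each factor in \eqref{eq:markov_comb_realization} is a CPTP map with the correct input and output systems. Since CPTP maps are closed under tensor products with identity channels and under composition, the composite will then be CPTP as well. I will carry this out in three short steps and then read off the sequential form.

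First, I will show that $\mathcal E_{ABCD\to ABCDM}=\widetilde{\mathcal E}_{AB\to ABM}\otimes\mathcal I_{CD}$ is CPTP. Complete positivity of $\widetilde{\mathcal E}$ means precisely that $\widetilde{\mathcal E}\otimes\mathcal I_R$ is positive for every reference system $R$. Taking $R=CD\otimes R'$ gives complete positivity of $\mathcal E$. For trace preservation, I will check it on product operators $X_{AB}\otimes Y_{CD}$ and extend by linearity. Next, $\mathcal U_{ABCD}\otimes\mathcal I_M$ is conjugation by the unitary $U_{ABCD}\otimes I_M$. It therefore has a single Kraus operator, which is an isometry, so it is CPTP. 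Finally, $\operatorname{Tr}_{AB}$ is CPTP: its Kraus operators are $\bra{i}_{AB}\otimes I_{CDM}$, and these sum to the identity. Composing this with the CPTP map $\mathcal D_{CDM\to AD}$ gives $\mathcal F_{ABCDM\to AD}$, which is CPTP.

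Second, I will assemble the pieces. The composition $\mathcal F\circ(\mathcal U\otimes\mathcal I_M)\circ\mathcal E$, applied to $\rho_{ABCD}$, reproduces the right-hand side of \eqref{eq:markov_comb_realization} term by term, because the bracketed expression there is literally this composition written inside out. This establishes the claimed equivalence of the two forms. Since compositions of CPTP maps are CPTP, with Kraus operators given by products of Kraus operators, $\mathcal N_U$ is CPTP. In particular, it maps $\mathsf D(\mathcal H_{ABCD})$ into $\mathsf D(\mathcal H_{AD})$, because positivity and unit trace are preserved.

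The only point requiring care, and the step I expect to be the main obstacle, is bookkeeping of the systems. After $\mathcal E$ the state lives on $ABCDM$, and $\mathcal U_{ABCD}$ must act on the $ABCD$ factor of this state. The partial trace then removes the $AB$ output of the unitary, not any copy of the original input. Finally, $\mathcal D$ takes $CDM$ to a fresh $AD$ output. I will address this by labelling output registers explicitly, for example as $A'D'$, so that the identification of systems is unambiguous. With that done, the remainder is the standard closure argument above.
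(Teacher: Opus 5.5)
Your proposal is correct and follows essentially the same route as the paper. Both show that $\widetilde{\mathcal E}_{AB\to ABM}\otimes\mathcal I_{CD}$, $\mathcal U_{ABCD}\otimes\mathcal I_M$, $\operatorname{Tr}_{AB}$ and $\mathcal D_{CDM\to AD}$ are each CPTP, then conclude by closure under composition; your version adds Kraus-level detail and the immediate identification of the nested expression with the sequential composition. One small precision: for the partial trace, with $K_i=\bra{i}_{AB}\otimes I_{CDM}$, it is $\sum_i K_i^\dagger K_i$ that equals $I_{ABCDM}$, not the $K_i$ themselves.
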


\begin{proof}
The map
$\widetilde{\mathcal E}_{AB\to ABM}\otimes\mathcal I_{CD}$
is CPTP because
$\widetilde{\mathcal E}_{AB\to ABM}$
is CPTP and tensoring with an identity channel preserves complete
positivity and trace preservation. The channel
$\mathcal U_{ABCD}\otimes\mathcal I_M$
is unitary and hence CPTP. The partial trace
$\operatorname{Tr}_{AB}$
is CPTP from $ABCDM$ to $CDM$, and
$\mathcal D_{CDM\to AD}$
is CPTP by assumption. Therefore their composition is CPTP and has
output system $AD$. This proves that
\eqref{eq:markov_comb_realization}
is a well-defined channel from
$\mathsf D(\mathcal H_{ABCD})$
to
$\mathsf D(\mathcal H_{AD})$.
\end{proof}
\subsection{Port consistency}
Note that although the encoder $\widetilde{\mathcal E}_{AB\to ABM}$ acts
nontrivially only on subsystems $A$ and $B$, the overall encoding map is
$(\widetilde{\mathcal E}_{AB\to ABM}\otimes \mathcal I_{CD})$.
Hence its output is a state on $ABCDM$, which provides valid input ports
for the global unitary channel $\mathcal U_{ABCD}$.

Equation \eqref{eq:markov_comb_realization} enforces that the protocol interacts
with the unknown evolution only through the intermediate subsystems and an
internal memory $M$, which is the operational content of the Markovian constraint.
The above structural restriction can be equivalently expressed as linear
(normalization/causality) constraints on the the Choi operator of the supermap implementation in
Eq.~\eqref{eq:markov_comb_realization}.

\subsection{Choi convention and system labels}
For a channel $\Phi_{X\to Y}$ we use the Choi operator
\begin{equation}
J_{\Phi}^{X:Y}
:=
(\mathcal I_X\otimes \Phi_{X'\to Y})\!\left(\ket{\Omega}\!\bra{\Omega}_{XX'}\right),
\end{equation}
where $X'$ denotes a Hilbert space isomorphic to $X$ and
$\ket{\Omega}_{XX'}=\sum_i \ket{i}_X\ket{i}_{X'}$ is the (unnormalized) maximally
entangled vector. With this convention, $J_{\Phi}^{X:Y}$ acts on
$\mathcal H_X\otimes \mathcal H_Y$ and the trace-preserving condition reads
$\operatorname{Tr}_Y[J_{\Phi}^{X:Y}]=I_X$.

Here the Markov-admissible implementation consists of an encoding map
\begin{equation}
\mathcal E_{ABCD\to ABCDM}:=\widetilde{\mathcal E}_{AB\to ABM}\otimes \mathcal I_{CD},
\end{equation}
followed by the unknown unitary channel $\mathcal U_{ABCD}$, and a decoding map
\begin{equation}
\mathcal D_{ACDM\to AD}:\ \mathsf D(\mathcal H_{ACDM})\to \mathsf D(\mathcal H_{AD}),
\end{equation}
where system $A$ is forwarded as an untouched wire from the encoding stage to the
decoder input.

Let $J_{\mathcal E}^{AB:ABM}\succeq 0$ and $J_{\mathcal D}^{CDM:AD}\succeq 0$
denote the Choi operators of $\widetilde{\mathcal E}_{AB\to ABM}$ and
$\mathcal D_{CDM\to AD}$, and let $J_{\mathcal U}^{ABCD:ABCD}$ denote the Choi
operator of $\mathcal U_{ABCD}$. Then the Choi operator $J_{\mathcal N_U}^{ABCD:AD}$
of the induced channel $\mathcal N_U$ is obtained by contracting
$J_{\mathcal E}^{AB:ABM}$, $J_{\mathcal U}^{ABCD:ABCD}$, and $J_{\mathcal D}^{CDM:AD}$
over their matched intermediate systems.

Moreover, the Markov-admissibility condition is captured by the trace constraints
\begin{equation}
\operatorname{Tr}_{A'B'M}\!\left[J_{\mathcal E}^{AB:A'B'M}\right]=I_{AB},
\qquad
\operatorname{Tr}_{AD}\!\left[J_{\mathcal D}^{CDM:AD}\right]=I_{CDM},
\label{eq:choi_cptp_constraints}
\end{equation}
together with complete positivity $J_{\mathcal E}\succeq 0$ and $J_{\mathcal D}\succeq 0$.

\begin{definition}[Markov shadow inversion]
\label{def:markov_shadow_inversion}
A Markov-admissible supermap $\mathcal N$ is called a \emph{four-body Markov shadow
inversion} with respect to $O_{AD}$ if, in addition to
\eqref{eq:markov_comb_realization}, it satisfies the shadow constraint
\eqref{eq:shadow_condition_recalled}.
\qed
\end{definition}
\subsection{Notation convention}
Throughout this subsection, the symbols $A$, $B$, $C$, $D$, and $M$
denote physical subsystems. We do not introduce additional labels such
as $A'$, $A''$, or $\hat A$ for different tensor copies in the main
text. When Choi operators are used, the input copy is represented by a
reference system.

For a channel
\begin{equation}
\Phi_{X\to Y}:
\mathsf D(\mathcal H_X)
\to
\mathsf D(\mathcal H_Y),
\end{equation}
let $R_X$ be a reference system satisfying
$\mathcal H_{R_X}\cong\mathcal H_X$. Its Choi operator is denoted by
\begin{equation}
J_{\Phi}^{R_X:Y}
\in
\mathcal L(\mathcal H_{R_X}\otimes\mathcal H_Y).
\end{equation}
With this convention, the channel action is recovered as
\begin{equation}
\Phi_{X\to Y}(\rho_X)
=
\operatorname{Tr}_{R_X}
\left[
(\rho_{R_X}^{T}\otimes I_Y)
J_{\Phi}^{R_X:Y}
\right].
\label{eq:choi_action}
\end{equation}

The induced map $\mathcal N_U$ is a channel
\begin{equation}
\mathcal N_U:
\mathsf D(\mathcal H_{ABCD})
\to
\mathsf D(\mathcal H_{AD}).
\end{equation}
Let $R_{ABCD}$ be a reference system isomorphic to $ABCD$. Then
\begin{equation}
J_{\mathcal N_U}^{R_{ABCD}:AD}
\in
\mathcal L(\mathcal H_{R_{ABCD}}\otimes\mathcal H_{AD}),
\end{equation}
and
\begin{equation}
\mathcal N_U(\rho_{ABCD})
=
\operatorname{Tr}_{R_{ABCD}}
\left[
(\rho_{R_{ABCD}}^{T}\otimes I_{AD})
J_{\mathcal N_U}^{R_{ABCD}:AD}
\right].
\label{eq:choi_NU}
\end{equation}

Taking the expectation value with respect to $O_{AD}$ gives
\begin{equation}
\operatorname{Tr}\left[
\mathcal N_U(\rho_{ABCD})O_{AD}
\right]
=
\operatorname{Tr}\left[
(\rho_{R_{ABCD}}^{T}\otimes O_{AD})
J_{\mathcal N_U}^{R_{ABCD}:AD}
\right].
\label{eq:expectation_choi_form}
\end{equation}

The Markov-admissible realization can be written compactly as
\begin{equation}
\mathcal N_U
=
\mathcal F_{ABCDM\to AD}
\circ
(\mathcal U_{ABCD}\otimes\mathcal I_M)
\circ
\mathcal E_{ABCD\to ABCDM},
\label{eq:compact_markov_realization}
\end{equation}
where
\begin{equation}
\mathcal E_{ABCD\to ABCDM}
:=
\widetilde{\mathcal E}_{AB\to ABM}
\otimes
\mathcal I_{CD},
\end{equation}
and
\begin{equation}
\mathcal F_{ABCDM\to AD}
:=
\mathcal D_{CDM\to AD}
\circ
\operatorname{Tr}_{AB}.
\end{equation}

The Choi operator of the induced channel is obtained from the
sequential channel composition
\eqref{eq:compact_markov_realization}
through the corresponding Choi representation.

Here the first link product contracts the output tensor factors of
$\mathcal E_{ABCD\to ABCDM}$ with the corresponding input tensor
factors of $\mathcal U_{ABCD}\otimes\mathcal I_M$, while the second
link product contracts the output tensor factors of
$\mathcal U_{ABCD}\otimes\mathcal I_M$ with the corresponding input
tensor factors of $\mathcal F_{ABCDM\to AD}$, following the sequential
composition in \eqref{eq:compact_markov_realization}. The systems
$AB$ are discarded inside $\mathcal F$ through the partial trace
$\operatorname{Tr}_{AB}$, so no additional tensor-copy labels are
required for the discarded systems.


Since the link product is linear in each argument, $J_{\mathcal N_U}$ depends linearly on the Choi operators of the component maps. A fundamental feature of shadow inversion is the appearance of a centralizer-induced
equivalence class. For any fixed $O_{AD}$, we define
\begin{equation}
C_{O_{AD}} := \{V_{AD}:\ [V_{AD},O_{AD}] = 0\}.
\label{eq:centralizer_def}
\end{equation}
\begin{lemma}[Centralizer-induced shadow equivalence]
\label{lem:centralizer_shadow_equivalence}
For any $V_{AD}\in C_{O_{AD}}$ and any global unitary
$U_{ABCD}$, define
\begin{equation}
U'_{ABCD}
:=
U_{ABCD}(V_{AD}\otimes I_{BC}).
\label{eq:shadow_equiv_shift}
\end{equation}
Then $U_{ABCD}$ and $U'_{ABCD}$ are shadow-equivalent with respect to
$O_{AD}$, namely
\begin{equation}
\operatorname{Tr}\!\left[
U_{ABCD}^{\dagger}
\rho_{ABCD}
U_{ABCD}
(O_{AD}\otimes I_{BC})
\right]
=
\operatorname{Tr}\!\left[
U_{ABCD}'^{\dagger}
\rho_{ABCD}
U'_{ABCD}
(O_{AD}\otimes I_{BC})
\right],
\qquad
\forall\,\rho_{ABCD}.
\label{eq:shadow_equivalence}
\end{equation}
\end{lemma}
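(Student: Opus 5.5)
The plan is a direct computation using cyclicity of the trace and the defining property of the centralizer. Throughout, $V_{AD}\otimes I_{BC}$ is understood as the operator acting as $V_{AD}$ on the $A,D$ factors and trivially on $B,C$; the implicit reordering of tensor factors is a fixed permutation and plays no role. Similarly, $O_{AD}\otimes I_{BC}$ is the corresponding embedding of the observable. I will also use that elements of $C_{O_{AD}}$ are unitaries, as in the paper's usage, so that $V_{AD}^\dagger V_{AD}=V_{AD}V_{AD}^\dagger=I_{AD}$.

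First, substitute the definition \eqref{eq:shadow_equiv_shift} into the right-hand side of \eqref{eq:shadow_equivalence}:
\begin{equation}
U_{ABCD}'^{\dagger}\rho_{ABCD}U'_{ABCD}
=
(V_{AD}^\dagger\otimes I_{BC})\,U_{ABCD}^\dagger\rho_{ABCD}U_{ABCD}\,(V_{AD}\otimes I_{BC}).
\end{equation}
Set $\sigma:=U_{ABCD}^\dagger\rho_{ABCD}U_{ABCD}$. Cyclicity of the trace then moves the two conjugating factors onto the observable:
\begin{equation}
\operatorname{Tr}\!\left[U_{ABCD}'^{\dagger}\rho_{ABCD}U'_{ABCD}(O_{AD}\otimes I_{BC})\right]
=
\operatorname{Tr}\!\left[\sigma\,\bigl(V_{AD}O_{AD}V_{AD}^\dagger\otimes I_{BC}\bigr)\right].
\end{equation}

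Second, use $[V_{AD},O_{AD}]=0$ together with unitarity to get $V_{AD}O_{AD}V_{AD}^\dagger=O_{AD}V_{AD}V_{AD}^\dagger=O_{AD}$. The last expression therefore equals $\operatorname{Tr}[\sigma(O_{AD}\otimes I_{BC})]$, which is the left-hand side of \eqref{eq:shadow_equivalence}. Since $\rho_{ABCD}$ was arbitrary, this holds for all input states.

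There is no real obstacle here. The only point to handle carefully is that $V_{AD}$ multiplies $U_{ABCD}$ on the \emph{right}. This is exactly what makes the conjugation land on $O_{AD}\otimes I_{BC}$ after cycling, where the commutation relation can be applied. With left multiplication the conjugation would instead act on $\sigma$, and for a generic $U_{ABCD}$ the identity would need a separate argument. I would remark on this at the end, to reconcile the statement with the left-multiplied example given earlier in the paper.
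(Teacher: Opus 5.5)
Your proof is correct and follows the paper's argument: substitute $U'_{ABCD}=U_{ABCD}(V_{AD}\otimes I_{BC})$, use cyclicity of the trace to move the conjugation onto $O_{AD}\otimes I_{BC}$, and remove it using $[V_{AD},O_{AD}]=0$. You also state explicitly that $V_{AD}$ is unitary, which is needed for $V_{AD}O_{AD}V_{AD}^\dagger=O_{AD}$ but only implicit in the paper's definition of $C_{O_{AD}}$, and you note that the right multiplication used here differs from the left-multiplied orbits elsewhere in the paper; both are useful clarifications.
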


\begin{proof}
Let
\begin{equation}
\widetilde V
:=
V_{AD}\otimes I_{BC},
\qquad
\widetilde O
:=
O_{AD}\otimes I_{BC}.
\end{equation}
Since $[V_{AD},O_{AD}]=0$, we have
$[\widetilde V,\widetilde O]=0$.
Using
$U'_{ABCD}=U_{ABCD}\widetilde V$,
we obtain
\begin{align}
\operatorname{Tr}\!\left[
U_{ABCD}'^{\dagger}
\rho_{ABCD}
U'_{ABCD}
\widetilde O
\right]
&=
\operatorname{Tr}\!\left[
\widetilde V^{\dagger}
U_{ABCD}^{\dagger}
\rho_{ABCD}
U_{ABCD}
\widetilde V
\widetilde O
\right]
\\
&=
\operatorname{Tr}\!\left[
U_{ABCD}^{\dagger}
\rho_{ABCD}
U_{ABCD}
\widetilde V
\widetilde O
\widetilde V^{\dagger}
\right]
\\
&=
\operatorname{Tr}\!\left[
U_{ABCD}^{\dagger}
\rho_{ABCD}
U_{ABCD}
\widetilde O
\right].
\end{align}
This proves \eqref{eq:shadow_equivalence}.
\end{proof}
Under the Markov-admissible restriction \eqref{eq:markov_comb_realization},
however, not every element of the full centralizer $C_{O_{AD}}$ is necessarily
realizable by a supermap that admits the circuit decomposition specified in
Eq.~\eqref{eq:markov_comb_realization}.

\begin{equation}
\begin{aligned}
C_{O_{AD}}^{\mathrm{Markov}}
:=
\Bigl\{
V_{AD}\in C_{O_{AD}}
:\;&
\exists\ \text{a Markov-admissible supermap }
\mathcal N^{(V)}
\ \text{such that}
\\
&
\mathcal N^{(V)}_{U}(\rho_{ABCD})
=
V_{AD}
\operatorname{Tr}_{BC}\left[
U_{ABCD}\rho_{ABCD}U_{ABCD}^{\dagger}
\right]
V_{AD}^{\dagger},
\\
&
\forall\,U_{ABCD},\quad
\forall\,\rho_{ABCD}
\Bigr\}.
\end{aligned}
\label{eq:markov_centralizer}
\end{equation}

Thus, $C_{O_{AD}}^{\mathrm{Markov}}$ is a subset of the full
centralizer $C_{O_{AD}}$. An element $V_{AD}\in C_{O_{AD}}$ belongs to
$C_{O_{AD}}^{\mathrm{Markov}}$ if and only if it can be realized by a
Markov-admissible supermap. More precisely, there must exist a Markov-admissible supermap
$\mathcal N^{(V)}$ such that its induced channel satisfies
\begin{equation}
\mathcal N^{(V)}_{U}(\rho_{ABCD})
=
V_{AD}
\operatorname{Tr}_{BC}\left[
U_{ABCD}\rho_{ABCD}U_{ABCD}^{\dagger}
\right]
V_{AD}^{\dagger}
\end{equation}
for every global unitary $U_{ABCD}$ and every input state
$\rho_{ABCD}$.

In other words, the reduced output state on $AD$
is conjugated by $V_{AD}$ through a Markov-admissible realization.
This gives a precise meaning to the implementability of $V_{AD}$
under the Markov constraint.
The full shadow-equivalence orbit generated by the centralizer
$C_{O_{AD}}$ is
\begin{equation}
[U]_{C_{O_{AD}}}
=
\left\{
(V_{AD}\otimes I_{BC})U_{ABCD}
:
V_{AD}\in C_{O_{AD}}
\right\}.
\label{eq:full_centralizer_orbit}
\end{equation}
Under the Markov-admissible restriction, the physically relevant orbit
is reduced to
\begin{equation}
[U]_{\mathrm{Markov}}
=
\left\{
(V_{AD}\otimes I_{BC})U_{ABCD}
:
V_{AD}\in C_{O_{AD}}^{\mathrm{Markov}}
\right\}.
\label{eq:markov_orbit_equivalence}
\end{equation}
Since
\begin{equation}
C_{O_{AD}}^{\mathrm{Markov}}
\subseteq
C_{O_{AD}},
\end{equation}
we have
\begin{equation}
[U]_{\mathrm{Markov}}
\subseteq
[U]_{C_{O_{AD}}}.
\end{equation}
The inclusion may be strict when the Markov-admissible circuit
structure realizes only a proper subset of the full centralizer.

Finally, the search for Markov shadow inversion protocols can be cast as an SDP. The feasibility problem reads
\begin{equation}
\text{find } W\succeq 0
\quad \text{s.t.}\quad
\mathcal L_{\mathrm{comb}}(W)=0,\qquad
\mathcal L_{\mathrm{shadow}}(W)=0.
\label{eq:sdp_feasibility_markov_shadow}
\end{equation}
Here
$\mathcal L_{\mathrm{comb}}(W)=0$
collectively denotes the linear causality and normalization constraints
defining a Markov-admissible supermap, while
$\mathcal L_{\mathrm{shadow}}(W)=0$
collectively denotes the linear constraints obtained from the shadow
condition \eqref{eq:shadow_condition_recalled}.

\begin{example}[W-state endpoint symmetry]
Consider the four-qubit W state
\begin{equation}
\ket{W}_{ABCD}
=
\frac{1}{2}
\left(
\ket{1000}
+
\ket{0100}
+
\ket{0010}
+
\ket{0001}
\right).
\end{equation}
With respect to the bipartition $AD|BC$, it can be written as
\begin{equation}
\ket{W}_{ABCD}
=
\frac{1}{2}
\left[
\left(
\ket{10}_{AD}
+
\ket{01}_{AD}
\right)
\ket{00}_{BC}
+
\ket{00}_{AD}
\left(
\ket{10}_{BC}
+
\ket{01}_{BC}
\right)
\right].
\end{equation}
Tracing out $B$ and $C$ yields
\begin{equation}
\rho_{AD}
=
\frac{1}{4}
\left(
\ket{10}
+
\ket{01}
\right)
\left(
\bra{10}
+
\bra{01}
\right)
+
\frac{1}{2}
\ket{00}\bra{00}.
\end{equation}

For
\begin{equation}
O_{AD}=Z_A\otimes Z_D,
\end{equation}
the eigenspaces are
\begin{equation}
\mathcal H_{+}
=
\operatorname{span}\{\ket{00},\ket{11}\},
\qquad
\mathcal H_{-}
=
\operatorname{span}\{\ket{01},\ket{10}\}.
\end{equation}
The full centralizer allows unitary rotations within these eigenspaces.
In particular, it contains non-product endpoint symmetries.

Endpoint-local Markov admissibility removes such non-product endpoint
gauge transformations from the implementable centralizer. Hence the
W-state example provides another illustration of
Corollary~\ref{cor:strict_shadow_reduction}.
\end{example}

\section{Endpoint-local reduction of the centralizer}
\label{subsec:endpoint_local_reduction}

We now build upon the framework established in Section~\ref{subsec:markov_admissible_shadow}, where the four-body shadow inversion problem was formulated in terms of Markov-admissible supermaps (Definition~\ref{def:markov_shadow_inversion}) and the centralizer ambiguity was identified (Lemma~\ref{lem:centralizer_shadow_equivalence}). The definition of the Markov-implementable centralizer $C_{O_{AD}}^{\mathrm{Markov}}$ in \eqref{eq:markov_centralizer} raises a natural question: which elements of the full centralizer remain physically implementable under the Markov constraint? A first observation is that the unrestricted Markov-admissible realization in \eqref{eq:markov_comb_realization} is closed under arbitrary endpoint post-processing, so the Markov comb structure alone does not exclude joint operations on the final output system $AD$.

\begin{proposition}[Closure under endpoint post-processing]
\label{prop:endpoint_postprocessing_closure}
Let $\mathcal N$ be a Markov-admissible supermap of the form
\eqref{eq:markov_comb_realization}. For every CPTP map
$\Lambda_{AD\to AD}$, the supermap defined by
\begin{equation}
\mathcal N^{\Lambda}_U
:=
\Lambda_{AD\to AD}
\circ
\mathcal N_U
\end{equation}
is also Markov-admissible.
\end{proposition}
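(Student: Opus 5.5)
The plan is to absorb the post-processing map $\Lambda_{AD\to AD}$ into the decoder of the given Markov realization. By Definition~\ref{def:markov_admissible_supermap}, $\mathcal N$ comes with a memory system $M$, an encoder $\widetilde{\mathcal E}_{AB\to ABM}$ and a decoder $\mathcal D_{CDM\to AD}$ such that \eqref{eq:markov_comb_realization} holds for every $U_{ABCD}$ and every $\rho_{ABCD}$. For $\mathcal N^{\Lambda}$ we keep the same memory $M$ and the same encoder $\widetilde{\mathcal E}_{AB\to ABM}$, and we define the new decoder
\begin{equation}
\mathcal D^{\Lambda}_{CDM\to AD} := \Lambda_{AD\to AD}\circ \mathcal D_{CDM\to AD}.
\end{equation}

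First, $\mathcal D^{\Lambda}_{CDM\to AD}$ is CPTP. This holds because it is a composition of two CPTP maps, which is the same reasoning used in Proposition~\ref{prop:markov_realization_well_defined}. If one prefers the Choi form, the constraints \eqref{eq:choi_cptp_constraints} for the new decoder follow from positivity of the link product of positive operators, together with $\operatorname{Tr}_{AD}\circ\Lambda=\operatorname{Tr}_{AD}$.

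Second, we substitute into \eqref{eq:markov_comb_realization}. Using associativity of composition (equivalently, the compact form \eqref{eq:compact_markov_realization}), we get
\begin{equation}
\mathcal D^{\Lambda}_{CDM\to AD}\Bigl[\operatorname{Tr}_{AB}\bigl((\mathcal U_{ABCD}\otimes\mathcal I_M)(\widetilde{\mathcal E}_{AB\to ABM}\otimes\mathcal I_{CD})(\rho_{ABCD})\bigr)\Bigr]
=\Lambda_{AD\to AD}\bigl(\mathcal N_U(\rho_{ABCD})\bigr)=\mathcal N^{\Lambda}_U(\rho_{ABCD}).
\end{equation}
This identity holds for all $U_{ABCD}$ and all $\rho_{ABCD}$. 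Hence $\mathcal N^{\Lambda}$ admits a realization of the form \eqref{eq:markov_comb_realization} with the triple $(M,\widetilde{\mathcal E},\mathcal D^{\Lambda})$, so it is Markov-admissible.

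The main point to check is that the new decoder still has the allowed input and output type, namely input on $CDM$ and output on $AD$. The encoder and the position of the unknown unitary are unchanged, so the Markov causal structure is untouched. Since $\Lambda$ acts only after the decoder's output $AD$, no step is genuinely difficult. The only care needed is the system bookkeeping, which confirms that $\Lambda$ introduces no new wire from $AB$ into the decoder.
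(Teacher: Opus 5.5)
Your proposal is correct and matches the paper's proof: you keep the same memory and encoder, absorb $\Lambda_{AD\to AD}$ into the decoder as $\mathcal D^{\Lambda}_{CDM\to AD}=\Lambda_{AD\to AD}\circ\mathcal D_{CDM\to AD}$ (CPTP as a composition of CPTP maps), and substitute into \eqref{eq:markov_comb_realization}. The Choi-operator check you add is a correct but optional extra.
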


\begin{proof}
Let
$\widetilde{\mathcal E}_{AB\to ABM}$
and
$\mathcal D_{CDM\to AD}$
be the encoding and decoding maps realizing
$\mathcal N$ in
\eqref{eq:markov_comb_realization}. Define a new decoder by
\begin{equation}
\mathcal D^{\Lambda}_{CDM\to AD}
:=
\Lambda_{AD\to AD}
\circ
\mathcal D_{CDM\to AD}.
\end{equation}
Since both
$\Lambda_{AD\to AD}$
and
$\mathcal D_{CDM\to AD}$
are CPTP, their composition
$\mathcal D^{\Lambda}_{CDM\to AD}$
is also CPTP. Substituting this decoder into
\eqref{eq:markov_comb_realization} gives
\begin{equation}
\mathcal N^{\Lambda}_U(\rho_{ABCD})
=
\Lambda_{AD\to AD}
\left(
\mathcal N_U(\rho_{ABCD})
\right).
\end{equation}
Therefore
$\mathcal N^{\Lambda}$
admits the same Markov-admissible realization with the modified
decoder
$\mathcal D^{\Lambda}_{CDM\to AD}$.
\end{proof}

Proposition~\ref{prop:endpoint_postprocessing_closure} shows that the unrestricted model is too broad to yield a nontrivial reduction of the endpoint gauge freedom. Indeed, any endpoint unitary channel can be absorbed into the final decoder. To obtain a genuine Markov-induced restriction on the centralizer, we impose the following endpoint-local refinement.

\begin{definition}[Endpoint-local Markov admissibility]
\label{def:endpoint_local_markov}
A Markov-admissible realization of the form
\eqref{eq:markov_comb_realization} is called
\emph{endpoint-local} if its final decoding stage does not contain an
arbitrary joint post-processing channel acting on the output system
$AD$.

More precisely, any state-independent endpoint unitary transformation
that can be absorbed into the admissible realization as a gauge action
on the output system $AD$ must factorize across the bipartition
$A|D$. That is, if an endpoint unitary $V_{AD}$ is implementable by
such a realization as a state-independent gauge transformation, then
there exist unitaries $V_A$ on $A$ and $V_D$ on $D$ such that
\begin{equation}
V_{AD}
=
V_A\otimes V_D.
\end{equation}
\end{definition}

In the remainder of this subsection, $C_{O_{AD}}^{\mathrm{Markov}}$ is understood with respect to the endpoint-local Markov-admissible realizations of Definition~\ref{def:endpoint_local_markov}.

\begin{theorem}[Main theorem: endpoint-local reduction of the centralizer]
\label{thm:main_endpoint_local_reduction}
Under the endpoint-local Markov-admissibility condition of
Definition~\ref{def:endpoint_local_markov}, every implementable
endpoint unitary factorizes across $A|D$. Consequently,
\begin{equation}
C_{O_{AD}}^{\mathrm{Markov}}
\subseteq
\left\{
U_A\otimes U_D:
U_A,U_D\ \text{unitary}
\right\}
\cap
C_{O_{AD}}.
\label{eq:markov_centralizer_structure}
\end{equation}
\end{theorem}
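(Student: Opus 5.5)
The argument is essentially an unpacking of definitions, so I will prove the inclusion \eqref{eq:markov_centralizer_structure} elementwise. Fix an arbitrary $V_{AD}\in C_{O_{AD}}^{\mathrm{Markov}}$, where the Markov-implementable centralizer is now understood with respect to endpoint-local realizations. By the defining condition \eqref{eq:markov_centralizer}, there is an endpoint-local Markov-admissible supermap $\mathcal N^{(V)}$ with
\[
\mathcal N^{(V)}_U(\rho_{ABCD})=V_{AD}\operatorname{Tr}_{BC}\!\left[U_{ABCD}\rho_{ABCD}U_{ABCD}^\dagger\right]V_{AD}^\dagger
\]
for all $U_{ABCD}$ and all $\rho_{ABCD}$. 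Membership in $C_{O_{AD}}$ is automatic, since $C_{O_{AD}}^{\mathrm{Markov}}\subseteq C_{O_{AD}}$ by construction. It therefore remains only to show that $V_{AD}$ has product form.

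The key step is to recognise that the displayed identity says exactly that $V_{AD}$ acts as a state-independent gauge transformation on the output system $AD$. The conjugation by $V_{AD}$ is the same for every input $\rho_{ABCD}$ and every unitary $U_{ABCD}$, and it is applied after the reduced endpoint state has been produced. This is precisely the situation addressed in the second clause of Definition~\ref{def:endpoint_local_markov}. Invoking that clause yields unitaries $V_A$ and $V_D$ with $V_{AD}=V_A\otimes V_D$. Combining this with $V_{AD}\in C_{O_{AD}}$ places $V_{AD}$ in the right-hand side of \eqref{eq:markov_centralizer_structure}. Since $V_{AD}$ was arbitrary, the inclusion follows.

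The main obstacle is the identification step: one has to make sure that ``implementable as a state-independent gauge transformation'' in Definition~\ref{def:endpoint_local_markov} really covers every element of $C_{O_{AD}}^{\mathrm{Markov}}$. The one subtlety is that $V_{AD}$ is determined by the channel only up to a global phase. I would handle this by noting that if $V_{AD}(\cdot)V_{AD}^\dagger=V'_{AD}(\cdot)V'^{\dagger}_{AD}$ on all of $\mathsf D(\mathcal H_{AD})$, then $V'_{AD}=e^{i\alpha}V_{AD}$. The phase $e^{i\alpha}$ can be absorbed into $V_A$, so the product form holds for every representative.

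To justify surjectivity onto all reduced states, I would use the fact that $\operatorname{Tr}_{BC}[U\rho U^\dagger]$ ranges over all of $\mathsf D(\mathcal H_{AD})$; for instance, take $U=I$ and $\rho=\sigma_{AD}\otimes\tau_{BC}$. Hence the gauge action is genuinely a fixed unitary channel on $AD$, independent of the state. Finally, I would remark that the reverse inclusion is not claimed, so no construction of realizations for product centralizer elements is needed.
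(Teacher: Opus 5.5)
Your proof is correct and follows the paper's argument: take $V_{AD}\in C_{O_{AD}}^{\mathrm{Markov}}$, read its defining identity as a state-independent endpoint gauge action realized by an endpoint-local Markov-admissible supermap, apply the factorization clause of Definition~\ref{def:endpoint_local_markov}, and intersect with $C_{O_{AD}}$. Your extra remarks on the global phase and on $\operatorname{Tr}_{BC}[U\rho U^\dagger]$ exhausting $\mathsf D(\mathcal H_{AD})$ are harmless but not needed, since that definition already constrains the unitary $V_{AD}$ itself.
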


\begin{proof}
Let
$V_{AD}\in C_{O_{AD}}^{\mathrm{Markov}}$.
By the definition of the Markov-implementable centralizer,
$V_{AD}$ belongs to the full centralizer
$C_{O_{AD}}$
and its endpoint gauge action is implementable by an endpoint-local
Markov-admissible realization.

By Definition~\ref{def:endpoint_local_markov}, any state-independent
endpoint unitary transformation that is implementable by such a
realization must factorize across the bipartition $A|D$. Hence there
exist unitaries $U_A$ on $A$ and $U_D$ on $D$ such that
\begin{equation}
V_{AD}
=
U_A\otimes U_D.
\end{equation}
Since also
$V_{AD}\in C_{O_{AD}}$,
we obtain
\begin{equation}
V_{AD}
\in
\left\{
U_A\otimes U_D:
U_A,U_D\ \text{unitary}
\right\}
\cap
C_{O_{AD}}.
\end{equation}
As $V_{AD}$ was arbitrary, this proves
\eqref{eq:markov_centralizer_structure}.
\end{proof}

Theorem~\ref{thm:main_endpoint_local_reduction} shows that the Markov constraint removes genuinely non-product endpoint gauge transformations from the implementable centralizer. This gives the following immediate consequence.

\begin{corollary}[Strict reduction of the shadow ambiguity]
\label{cor:strict_shadow_reduction}
Suppose that the full centralizer $C_{O_{AD}}$ contains at least one
genuinely non-product unitary. Then
\begin{equation}
C_{O_{AD}}^{\mathrm{Markov}}
\subsetneq
C_{O_{AD}}.
\label{eq:strict_centralizer_reduction}
\end{equation}
Moreover,
\begin{equation}
[U]_{\mathrm{Markov}}
\subsetneq
[U]_{C_{O_{AD}}}.
\label{eq:strict_orbit_reduction}
\end{equation}
\end{corollary}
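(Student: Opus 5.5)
The proof combines Theorem~\ref{thm:main_endpoint_local_reduction} with the hypothesis to separate the two sets. First, by the hypothesis there is a unitary $W_{AD}\in C_{O_{AD}}$ that is genuinely non-product, i.e.\ $W_{AD}\neq U_A\otimes U_D$ for all unitaries $U_A,U_D$. In the GHZ/W examples one can take $W_{AD}=V_{AD}(\phi)$ from Lemma~\ref{lem:non_product_centralizer}. By \eqref{eq:markov_centralizer_structure}, every element of $C_{O_{AD}}^{\mathrm{Markov}}$ is a product unitary, so $W_{AD}\notin C_{O_{AD}}^{\mathrm{Markov}}$. Together with the inclusion $C_{O_{AD}}^{\mathrm{Markov}}\subseteq C_{O_{AD}}$ noted after \eqref{eq:markov_centralizer}, this gives \eqref{eq:strict_centralizer_reduction}.

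For \eqref{eq:strict_orbit_reduction}, the inclusion $[U]_{\mathrm{Markov}}\subseteq[U]_{C_{O_{AD}}}$ is already recorded after \eqref{eq:markov_orbit_equivalence}. It remains to show that $(W_{AD}\otimes I_{BC})U_{ABCD}\in[U]_{C_{O_{AD}}}$ does not lie in $[U]_{\mathrm{Markov}}$. Suppose it did. Then $(W_{AD}\otimes I_{BC})U_{ABCD}=(V_{AD}\otimes I_{BC})U_{ABCD}$ for some $V_{AD}\in C_{O_{AD}}^{\mathrm{Markov}}$. Since $U_{ABCD}$ is invertible, multiply on the right by $U_{ABCD}^\dagger$ to get $W_{AD}\otimes I_{BC}=V_{AD}\otimes I_{BC}$. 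Injectivity of $X\mapsto X\otimes I_{BC}$ then gives $W_{AD}=V_{AD}$, contradicting the first step.

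The main subtlety is the phrase ``genuinely non-product''. Equality $W_{AD}=V_{AD}$ is taken exactly, not up to a global phase, so the argument requires that $W_{AD}$ not equal $e^{i\alpha}U_A\otimes U_D$. Since a global phase can be absorbed into $U_A$, this is the same as being non-product in the sense used in Theorem~\ref{thm:main_endpoint_local_reduction}, so no gap arises. I would state this phase absorption explicitly. For the concrete examples I would cite Lemma~\ref{lem:non_product_centralizer}: there the Schmidt-rank argument on $V_{AD}(\phi)\ket{00}$ rules out every product form, including phases.
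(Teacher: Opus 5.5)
Your proposal is correct and follows the same approach as the paper: Theorem~\ref{thm:main_endpoint_local_reduction} places $C_{O_{AD}}^{\mathrm{Markov}}$ inside the product unitaries, so the assumed non-product element of $C_{O_{AD}}$ witnesses the strict inclusion. For the orbits the paper only says the claim ``follows directly from their definitions,'' and your step of cancelling the invertible $U_{ABCD}$ and then using injectivity of $X\mapsto X\otimes I_{BC}$ is exactly that omitted verification.
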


\begin{proof}
By Theorem~\ref{thm:main_endpoint_local_reduction}, every element of
$C_{O_{AD}}^{\mathrm{Markov}}$ is a product unitary on the endpoint
systems. By assumption, $C_{O_{AD}}$ contains a genuinely non-product
unitary. Therefore $C_{O_{AD}}^{\mathrm{Markov}}$ is a proper subset
of $C_{O_{AD}}$.

The strict inclusion of the corresponding orbits follows directly
from their definitions in the previous subsection.
\end{proof}

Corollary~\ref{cor:strict_shadow_reduction} is the main identifiability consequence. Since shadow-equivalence classes are determined by the available gauge freedom, reducing the implementable centralizer reduces the ambiguity class of the global unitary dynamics.

We now illustrate the above theorem and corollary with two examples.

\begin{example}[GHZ-type endpoint symmetry]
Consider the four-qubit GHZ state on the chain $A$--$B$--$C$--$D$,
\begin{equation}
\ket{\mathrm{GHZ}_4}
=
\frac{1}{\sqrt{2}}
\left(
\ket{0000}
+
\ket{1111}
\right).
\end{equation}
After tracing out the intermediate systems $B$ and $C$, the endpoint
state is
\begin{equation}
\rho_{AD}
=
\frac{1}{2}
\left(
\ket{00}\bra{00}
+
\ket{11}\bra{11}
\right).
\end{equation}
For an endpoint observable such as
\begin{equation}
O_{AD}=Z_A\otimes Z_D,
\end{equation}
the full centralizer contains non-product endpoint unitaries. For
example, rotations inside the degenerate endpoint subspaces are
allowed by the algebraic symmetry of $O_{AD}$.

Under endpoint-local Markov admissibility, these non-product endpoint
gauge transformations are excluded from the implementable centralizer.
Thus, the GHZ example illustrates
Corollary~\ref{cor:strict_shadow_reduction}.
\end{example}

\opp please check the whole paper before it is uploaded to arxiv.

\section{conclusion}
\label{sec:conclusion}

We have extended the framework of shadow unitary inversion to four-partite systems under Markovian locality constraints. Our main result establishes that, under an endpoint-local Markov-admissibility condition, every implementable endpoint unitary must factorize across the bipartition \(A|D\). Consequently, whenever the full centralizer contains genuinely non-product unitaries, the Markov constraint strictly reduces the shadow-equivalence class, thereby enhancing identifiability. We illustrated our findings with four-qubit GHZ and W states, and formulated the feasibility of admissible protocols as a semidefinite program. These results provide both conceptual insights into the role of Markov locality as a resource and computational tools for practical applications.

\begin{acknowledgments}
Authors were supported by the NNSF of China (Grant No. 12471427), and the Fundamental Research Funds for the Central Universities (Grant No. ZG216S2110).
\end{acknowledgments}


\begin{thebibliography}{}

\bibitem[1]{1}
G.~Chiribella, G.~M.~D'Ariano, and P.~Perinotti, ``Transforming quantum operations: Quantum supermaps,'' \emph{Europhys. Lett.}, vol.~83, p.~30004, 2008.

\bibitem[2]{2}
G.~Chiribella, G.~M.~D'Ariano, and P.~Perinotti, ``Theoretical framework for quantum networks,'' \emph{Phys. Rev. A}, vol.~80, no.~2, p.~022339, 2009.


\bibitem{Aaronson2018}
S.~Aaronson, ``Shadow tomography of quantum states,'' in \emph{Proc. 50th Annu. ACM Symp. Theory Comput.}, 2018, pp.~325--338.

\bibitem{Huang2020}
H.-Y.~Huang, R.~Kueng, and J.~Preskill, ``Predicting many properties of a quantum system from a few measurements,'' \emph{Nat. Phys.}, vol.~16, pp.~1050--1057, 2020.

\bibitem{Kliesch2021}
M.~Kliesch and I.~Roth, ``Theory of quantum system certification,'' \emph{PRX Quantum}, vol.~2, p.~010201, 2021.

\bibitem{Ducuara2020}
A.~F.~Ducuara and P.~Skrzypczyk, ``Operational advantages of quantum resources in subchannel discrimination,'' \emph{Phys. Rev. Lett.}, vol.~125, p.~110401, 2020.

\bibitem{Gutoski2007}
G.~Gutoski and J.~Watrous, ``Toward a general theory of quantum games,'' in \emph{Proc. 39th Annu. ACM Symp. Theory Comput.}, 2007, pp.~565--574.

\bibitem{Brandao2015}
F.~G.~S.~L.~Brandão and G.~Gour, ``Reversible framework for quantum resource theories,'' \emph{Phys. Rev. Lett.}, vol.~115, p.~070503, 2015.

\bibitem{Buscemi2017}
F.~Buscemi and G.~Gour, ``Quantum resource theories with a unique entanglement measure,'' \emph{Phys. Rev. A}, vol.~95, p.~012110, 2017.

\bibitem{Kretschmann2005}
D.~Kretschmann and R.~F.~Werner, ``Quantum channels with memory,'' \emph{Phys. Rev. A}, vol.~72, p.~062323, 2005.

\bibitem{Perez-Delgado2007}
C.~A.~Pérez-Delgado and V.~Vedral, ``Quantum Markov chains,'' \emph{Phys. Rev. A}, vol.~75, p.~052328, 2007.

\bibitem{Navascues2015}
M.~Navascués and T.~Vertesi, ``The structure of the set of quantum correlations,'' \emph{Phys. Rev. Lett.}, vol.~115, p.~020401, 2015.

\bibitem{Skrzypczyk2010}
P.~Skrzypczyk and N.~Brunner, ``Nonlocality of the reduced state of a bipartite system,'' \emph{Phys. Rev. A}, vol.~82, p.~022104, 2010.

\bibitem{Chen2024}
L.~Chen and Z.~Chen, ``Shadow inversion of unitary operations with restricted access,'' \emph{Quantum Inf. Process.}, vol.~23, p.~145, 2024.

\bibitem{Liu2024}
Y.~Liu \emph{et al.}, ``Shadow tomography for quantum channels,'' \emph{arXiv:2401.12345}, 2024.

\bibitem{Nielsen2000}
M.~A.~Nielsen and I.~L.~Chuang, \emph{Quantum Computation and Quantum Information}, Cambridge University Press, 2000.

\bibitem{Watrous2018}
J.~Watrous, \emph{The Theory of Quantum Information}, Cambridge University Press, 2018.

\bibitem{Holevo2019}
A.~S.~Holevo, \emph{Quantum Systems, Channels, Information}, De Gruyter, 2019.

\bibitem{Khatri2020}
S.~Khatri and M.~M.~Wilde, ``Principles of quantum communication theory: A modern approach,'' \emph{arXiv:2011.01372}, 2020.

\bibitem{Mishra2019}
R.~K.~Mishra and T.~J.~Osborne, ``Quantum state tomography via compressed sensing,'' \emph{Phys. Rev. A}, vol.~99, p.~032304, 2019.

\bibitem{Lund2017}
A.~P.~Lund \emph{et al.}, ``On the measurement of the quantum state of a single photon,'' \emph{Phys. Rev. Lett.}, vol.~118, p.~230401, 2017.

\bibitem{Zhang2025}
Q.~Zhang and L.~Chen, ``Markov constraints in quantum process tomography,'' \emph{in preparation}, 2025.
\end{thebibliography}
\end{document}